\title{Unfolding Orthotubes with a Dual Hamiltonian Path}
\author{Erik D. Demaine\thanks{Computer Science and Artificial Intelligence Laboratory, Massachusetts Institute of Technology, Cambridge, MA 02139, USA. \texttt{edemaine@mit.edu} }
\and Kritkorn Karntikoon\thanks{Department of Computer Science, Princeton University, Princeton, NJ 08544, USA. \texttt{kritkorn@princeton.edu}}}
\date{}
\newtheorem{theorem}{Theorem}
\newtheorem{lemma}[theorem]{Lemma}
\theoremstyle{definition}
\newtheorem{definition}{Definition}
\newtheorem{case}{Case}
\newtheorem{subcase}{Case}
\numberwithin{subcase}{case}
 \gdef\xxxmark{%
   \expandafter\ifx\csname @mpargs\endcsname\relax % in minipage?
     \expandafter\ifx\csname @captype\endcsname\relax % in figure/caption?
       \marginpar{xxx}% not in a caption or minipage, can use marginpar
     \else
       xxx % notice trailing space
     \fi
   \else
     xxx % notice trailing space
   \fi}
 \gdef\xxx{\@ifnextchar[\xxx@lab\xxx@nolab}
 \long\gdef\xxx@lab[#1]#2{\textbf{[\xxxmark #2 ---{\sc #1}]}}
 \long\gdef\xxx@nolab#1{\textbf{[\xxxmark #1]}}
\let\epsilon=\varepsilon
\newcommand\qturn{\operatorname{qturn}}
\newcommand\start{\operatorname{start}}
\newcommand\hole{\operatorname{hole}}
\def\defn#1{\textbf{\textit{\boldmath #1}}}
\begin{document}

\maketitle

\begin{abstract}
An \defn{orthotube} consists of orthogonal boxes (e.g., unit cubes)
glued face-to-face to form a path.
In 1998, Biedl et al.\ showed that every orthotube has a \defn{grid unfolding}:
a cutting along edges of the boxes so that
the surface unfolds into a connected planar shape without overlap.
We give a new algorithmic grid unfolding of orthotubes
with the additional property that
the rectangular faces are attached in a single path ---
a Hamiltonian path on the rectangular faces of the orthotube surface.
\end{abstract}

\section{Introduction}\label{section-introduction}

Does every orthogonal polyhedron have a \defn{grid unfolding}, that is,
a cutting along edges of the induced grid (extending a plane through every
face of the polyhedron) such that the remaining surface unfolds into a
connected planar shape without overlap?
This question remains unsolved over 20 years after this type of unfolding
was introduced in 1998 \cite{demainez1998unfolding};
see \cite{o2008unfolding} for a survey and
\cite{genus2,DBLP:conf/cccg/DamianF18,damian2021unfolding}
for recent progress.
This problem is in some sense the orthogonal nonconvex version of the
older and more famous open problem of whether every convex polyhedron
has an edge unfolding (cutting only along edges of the polyhedron)
\cite{Demaine:2008:GFA:1478766}.
Many subclasses of orthogonal polyhedra have been successfully unfolded,
though sometimes cutting along more than just grid edges
\cite{Bern2003UnunfoldablePW,damian2004edge,damian2005unfolding,damian2007epsilon,o2008unfolding,damian2014unfolding,Liou2014OnEO,yjchang2015,genus2,kyho2019arbitrary,DBLP:conf/cccg/DamianF18,damian2021unfolding}. 

%The problem of deciding whether every convex polyhedron can be edge-unfolded has been unsolved for quite a long time. To \defn{edge-unfold} a polyhedron is to cut along the polyhedron's surface so that the surface can be flattened into a connected polygon that does not overlap. Due to its difficulty in a general case, people tend to examine a variation of the original problem, specifically on \defn{orthogonal polyhedra} which is a polyhedron where each face is perpendicular to one of the coordinate axes.

The first class of orthogonal polyhedra shown to have a grid unfolding is
\defn{orthotubes} \cite{demainez1998unfolding}, formed by gluing together
a sequence of orthogonal boxes where every pair of consecutive boxes in the
sequence share one face (and no other boxes share faces).
Roughly speaking, this unfolding consists of a monotone dual-path of
rectangular faces, with $O(1)$ rectangles attached above and below the path.

\begin{figure}
\centering
\subcaptionbox{orthotube}{\includegraphics[width=0.20\columnwidth]{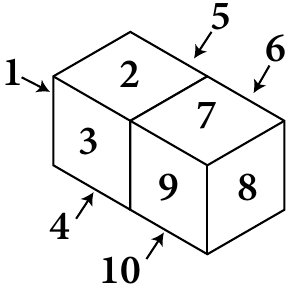}}\hfill
\subcaptionbox{face adjacency graph}{\includegraphics[width=0.25\columnwidth]{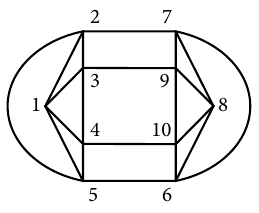}}\hfill
\subcaptionbox{\centering dual-Hamiltonian unfolding}{\includegraphics[width=0.22\columnwidth]{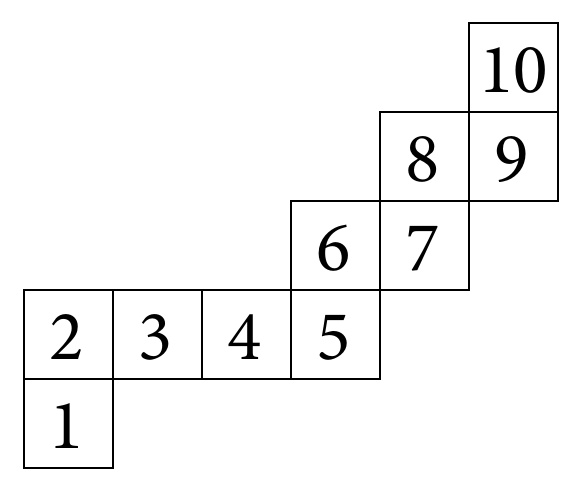}}\hfill
\subcaptionbox{\label{figure:example:dual}\centering dual Hamiltonian path}{\includegraphics[width=0.22\columnwidth]{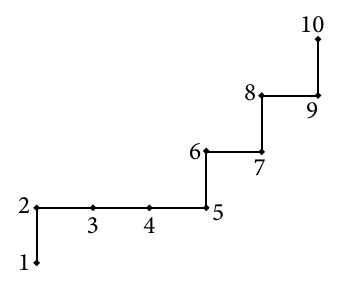}}
\caption{An example of an orthotube, its face adjacency graph, dual-Hamiltonian unfolding corresponding to chain code $RSSLRLRL$, and the corresponding dual Hamiltonian path.}
\label{figure:example}
\end{figure} 

In this paper, we show that orthotubes have a grid unfolding with a stronger
property we call \defn{dual-Hamiltonicity}, where the unfolded shape consists
of a single dual path of rectangular faces,
as shown in Figure~\ref{figure:example}.
More precisely, define the
\defn{face adjacency graph} to have a node for each rectangular face of a box,
and connect two nodes by an edge whenever the corresponding rectangular faces
share an edge.
Then the unfolding is given by keeping attached a chain of rectangular faces
%the duals of edges that
corresponding to a Hamiltonian path in the face adjacency graph.
Implicitly, we take advantage of the fact that 4-connected planar graphs
(which includes face adjacency graphs) have Hamiltonian cycles
\cite{thomassen1983theorem, chiba1989hamiltonian}.
This fact has also been studied previously in the context of vertex unfolding \cite{garcia2018vertex} and grid unfolding as \textit{zipper unfolding} \cite{lubiwDDSS10, demaineDU13}.

%One subclass that we focus on in this paper is \defn{orthotubes}, which are polyhedra made by connecting boxes face-to-face such that, if we represent each box with a node and each connected face with an edge, then we have a path or cycle. In \cite{demainez1998unfolding}, it is shown that every orthotube can be edge-unfolded. In this paper, we will propose a new algorithm to unfold this same subclass but with a property that the resulting surface is corresponding to the Hamiltonian path on an orthotube's \defn{face adjacency graph}, defined formally in Section \ref{section-definition}.

%In this paper, we implicitly take an advantage of the result that every 4-connected planar graph has a Hamiltonian cycle \cite{thomassen1983theorem,chiba1989hamiltonian}. By representing the enclosing surface of an orthotube into a \defn{face adjacency graph}, we are able to prove the 4-connectedness and find a Hamiltonian path between any two nodes of the graph. This gives us a guarantee that there is a path to travel between two faces on the orthotube's surface. However, to define such a path on the real surface, we use a modification of a chain code introduced by Lemus et al. in \cite{LEMUS20141721,LEMUS2015101}. With this chain code, we are able to represent each unfolding path with a sequence of direction and prove the correctness of our algorithm. 

The paper is organized as follows.
First, Section \ref{section-definition} defines useful tools
for expressing dual-Hamiltonian unfoldings.
Then Section~\ref{section-algorithm} describes and proves correct
our algorithm to find such an unfolding for a given orthotube.
%using dynamic programming.
%We use an invariant to show that there is no overlap in an unfolded path.
%%After that, we show the code of the algorithm in Section \ref{section-coding}. 
Finally, in Section~\ref{section-conclusion}, we describe possible
future extensions to our result.

\section{Chain Codes for Dual Paths}\label{section-definition}{}

As mentioned in Section~\ref{section-introduction},
our unfolding keeps attached a chain of rectangular faces
%the duals of edges that
corresponding to a Hamiltonian path in the face adjacency graph.
%the duals of edges that
In this section, we introduce a tool for describing such
dual-Hamiltonian paths called ``chain codes''
(similar to \cite{damian2007epsilon,zachary2011NP,LEMUS20141721,LEMUS2015101}):

\begin{definition} \label{def:chain code}
A \defn{chain code} is an ordered sequence of the form $a_1 a_2 \cdots a_n$,
where $a_i \in \{L, R, S\}$ represents an (intrinsic) left turn, right turn,
or continuing straight to move from the $i$th face to the $(i+1)$st face. 

Given a starting face $f_0$ and initial intrinsic direction on the surface
(equivalently, the next face $f_1$ visited),
a chain code $a_1 a_2 \cdots a_n$ defines a \defn{corresponding path}
$f_0, f_1, f_2, \dots, f_{n+1}$ in the face adjacency graph;
see Figure~\ref{figure:turnLRS}.
\end{definition}

\begin{figure}[htbp]
\centering
\subcaptionbox{Turning left   ($L$)}{\includegraphics[scale=0.8]{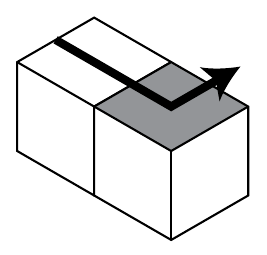}}\hfil
\subcaptionbox{Turning right  ($R$)}{\includegraphics[scale=0.8]{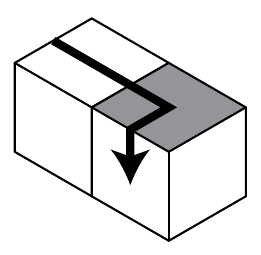}}\hfil
\subcaptionbox{Going straight ($S$)}{\includegraphics[scale=0.8]{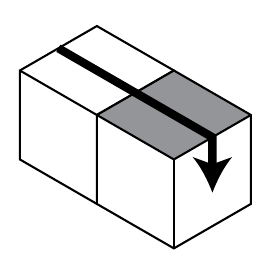}}
\caption{The three possible changes of direction for a dual path on the surface of an orthotube (forbidding U-turns). The turn occurs within the gray box, changing the entering direction from the previous box face to the exiting direction to next box face.}
\label{figure:turnLRS}
\end{figure}

If we unfold an orthotube to keep attached the path of faces
corresponding to a chain code,
then we can construct the 2D geometry of the unfolding
by following the intrinsic directions as follows:

\begin{definition}
The \defn{unfolding dual} of a chain code $a_1 a_2 \cdots a_n$
is the orthogonal equilateral path $p_0, p_1, \dots, p_n$
in the $xy$ plane that starts with the segment
from $p_0=(0,0)$ to $p_1=(0,1)$
and where the $i$th vertex $p_i$ turns left, turns right, or goes straight
according to $a_i \in \{L, R, S\}$.
The \defn{corresponding unfolding} has a unit square centered at
each vertex $p_i$, where the $i$th and $(i+1)$st squares are attached
along the $90^\circ$ rotation of the segment $p_i p_{i+1}$.
\end{definition}

For example, if the $i$th segment of the unfolding dual was from
$(x, y-1)$ to $(x,y)$, then the next vertex $i+1$ in the unfolding dual
is $(x-1, y)$ if $a_i = L$;
$(x+1, y)$ if $a_i = R$;
and $(x, y+1)$ if $a_i = S$.
Figure~\ref{figure:example:dual} shows another example
when the chain code is $RSSLRLRL$.
We can characterize nonoverlap as follows:

\begin{lemma} \label{lemma:unfolding no overlap}
  If the unfolding dual of a chain code has no overlapping vertices
  in the $xy$ plane, then the corresponding unfolding has no overlap.
\end{lemma}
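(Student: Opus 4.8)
The plan is to reduce this statement about two-dimensional squares to a statement about their center points $p_0,\dots,p_n$, using crucially that these centers lie on the integer lattice. First I would record two structural facts. Writing $p_i=(x_i,y_i)$, observe that $p_0=(0,0)$ and $p_1=(0,1)$ are integer points, and that each of the moves $L$, $R$, $S$ (as specified by the rule just before the lemma) translates the current vertex by a unit vector in a coordinate direction; a one-line induction then gives $p_i\in\Z^2$ for all $i$. Moreover, every segment $p_ip_{i+1}$ is axis-parallel, so its $90^\circ$ rotation is an axis-parallel unit segment, and two unit squares can share such a full edge only if both are axis-aligned. Hence each unit square is the axis-aligned square $S_i=(x_i-\tfrac12,x_i+\tfrac12)\times(y_i-\tfrac12,y_i+\tfrac12)$ centered at $p_i$.

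The key step is then a short geometric observation: two axis-aligned open unit squares with centers at distinct integer points have disjoint interiors. Indeed, if a point $(x,y)$ lay in both $S_i$ and $S_j$, then $|x-x_i|<\tfrac12$ and $|x-x_j|<\tfrac12$ would force $|x_i-x_j|<1$, and since $x_i,x_j$ are integers this gives $x_i=x_j$; symmetrically $y_i=y_j$, so $p_i=p_j$. Combining this with the hypothesis finishes the proof: ``no overlapping vertices'' means precisely that the $p_i$ are pairwise distinct, so the observation makes the interiors of the $S_i$ pairwise disjoint, which is exactly the asserted nonoverlap of the corresponding unfolding.

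I do not anticipate a genuine difficulty; the argument is essentially bookkeeping once the lattice structure is isolated. The one point that requires care---and the closest thing to an obstacle---is the meaning of ``no overlap,'' which must be interpreted as disjointness of the \emph{open} squares rather than the closed ones. Otherwise the shared boundary edges that glue consecutive faces $S_i$ and $S_{i+1}$ along the $90^\circ$-rotated segment would spuriously register as overlaps; working throughout with open squares is what lets the integer-lattice argument go through cleanly while still allowing adjacent squares to touch along an edge, as a connected unfolding requires.
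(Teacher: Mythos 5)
Your argument is correct and is exactly the elaboration of the paper's own (one-line) proof, which simply appeals to the correspondence between the integer points $p_i$ of the unfolding dual and the axis-aligned unit squares centered at them. Your lattice-distance observation and the open-versus-closed caveat are just the details the paper leaves implicit.
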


\begin{proof}
  By the correspondence between points $p_i$ of the unfolding dual
  and unit squares in the corresponding unfolding,
\end{proof}

\label{invariant}
To guarantee that our unfolding does not overlap, we prove the invariant that
the unfolding dual proceeds monotonically in the $+y$ direction.
We can measure this property more easily using the following notion:

\begin{definition}
For each chain code $U = a_1 a_2 \cdots a_n$, the
\defn{cumulative quarter turning} $\qturn(U)$ is
$\sum_{i=1}^n \qturn(a_i)$
where $\qturn(R) = +1$, $\qturn(L) = -1$, and $\qturn(S) = 0$.
\end{definition}

Our desired invariant of the unfolding dual proceeding monotonically
in the $+y$ direction is thus equivalent to requiring that
$\qturn$ of any prefix of the chain code is in $\{-1, 0, +1\}$.

\begin{lemma} \label{lemma:monotone no overlap}
  If a chain code $U$ satisfies $\qturn(P) \in \{-1, 0, +1\}$
  for every prefix $P$ of~$U$,
  then its corresponding unfolding has no overlap.
\end{lemma}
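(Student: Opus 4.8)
The plan is to reduce the claim to a statement purely about the unfolding dual and then exploit the $+y$-monotonicity that the hypothesis forces. By Lemma~\ref{lemma:unfolding no overlap}, the corresponding unfolding has no overlap whenever the unfolding dual has no overlapping vertices, so it suffices to prove that the vertices $p_0, p_1, \dots$ of the unfolding dual are pairwise distinct.

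First I would pin down how the chain code controls the heading of each segment. The segment leaving $p_i$ points in the direction obtained by rotating the initial heading $+y$ by $\qturn(a_1 \cdots a_i)$ quarter turns (clockwise for each $R$, counterclockwise for each $L$), where the empty prefix gives the initial north heading of $p_0 p_1$. Since every prefix $P$ satisfies $\qturn(P) \in \{-1, 0, +1\}$, the heading of every segment is one of exactly three compass directions: north ($\qturn = 0$, $+y$), east ($\qturn = +1$, $+x$), or west ($\qturn = -1$, $-x$). In particular the south heading $-y$ never occurs, so every segment has nonnegative $y$-component and the $y$-coordinates of $p_0, p_1, \dots$ are nondecreasing---this is precisely the claimed monotonicity.

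Next I would rule out a coincidence $p_j = p_k$ with $j < k$. Equal $y$-coordinates together with monotonicity force every segment strictly between $p_j$ and $p_k$ to be horizontal (east or west), since any north segment would strictly increase $y$. The crux is that such a horizontal run cannot reverse direction: if a segment heads east ($\qturn = +1$), then an $L$ would change the cumulative turning to $0$ and send the next heading north (leaving the level), while an $R$ would require $\qturn = +2$ and the forbidden south heading; hence the only admissible horizontal continuation is $S$, again east. The symmetric statement holds for west. Therefore the entire horizontal run has a single direction and is strictly monotone in $x$, giving $x(p_j) \neq x(p_k)$, a contradiction.

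Consequently all vertices of the unfolding dual are distinct, and Lemma~\ref{lemma:unfolding no overlap} yields that the corresponding unfolding has no overlap. I expect the main obstacle to be the horizontal-run step: the whole argument hinges on the observation that, under the $\{-1, 0, +1\}$ constraint, a horizontal segment may be followed by another horizontal segment only through $S$, never through an $L$/$R$ reversal, because reversing would force the path through the forbidden south heading. Once this monotonicity-within-a-level is established, distinctness of vertices---and hence nonoverlap---follows immediately.
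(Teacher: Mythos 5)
Your proof is correct and rests on the same two observations as the paper's: the prefix condition forbids a south heading (so the dual is $y$-monotone), and within a horizontal level the only admissible continuation is $S$ (so a row cannot reverse direction). The paper packages this as an induction over maximal rows where $\qturn$ returns to $0$, while you argue directly by contradiction on a repeated vertex, but the substance is the same.
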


\begin{proof}
  Divide $U$ into segments $S_1 S_2 \cdots S_k$
  where every prefix $P_j = S_1 S_2 \cdots S_j$ of segments
  has $\qturn(P_j) = 0$,
  and no other prefix $P$ of $U$ has $\qturn(P) = 0$.
  Assume by induction that the unfolding dual of $P_{j-1}$ has no repeated
  points, and that the last point is strictly above all other points.
  (In the base case, $j-1=0$ and $P_{j-1}$ is empty,
  so the hypothesis holds for the two points $p_0,p_1$.)
  If the first symbol of $P_j$ is $S$, then we immediately
  enter a new row and segment, so the inductive hypothesis holds.
  If the first symbol of $P_j$ is $R$, then $\qturn$ becomes $+1$,
  so the remaining symbols of $P_j$ must be $S$ followed by a final~$L$,
  at which point $\qturn$ becomes~$0$.
  The corresponding points in the unfolding dual form a rightward row (above all other points), and the next point starts a new row,
  establishing the inductive hypothesis in this case.
  If the first symbol of $P_j$ is $L$, then a symmetric argument holds.
  %Because each symbol $a_i \in \{L,R,S\}$ in $U$ changes $\qturn$ by
  %at most $\pm 1$, 
  Therefore the unfolding dual has no repeated points, so by
  Lemma~\ref{lemma:unfolding no overlap}, the corresponding unfolding
  does not overlap.
\end{proof}

\section{Algorithm}\label{section-algorithm}

In this section, we present an algorithm to unfold an orthotube along a
dual Hamiltonian path, by constructing a chain code for that path
(Definition~\ref{def:chain code}).
The algorithm builds the chain code incrementally, progressively unfolding
each box in the orthotube, while maintaining the invariant
(mentioned in Section~\ref{invariant}) that the unfolding's chain code so far
has $\qturn \in \{-1, 0, +1\}$.
By Lemma~\ref{lemma:monotone no overlap},
this invariant implies that the unfolding does not overlap.
In fact, to simplify the induction, the algorithm creates a chain code
for the unfolding of possibly several boxes at a time
to maintain the stronger condition that
the intermediate chain codes have $\qturn = 0$.
We divide into cases based on each box's relative position
to the next one, two, and sometimes three or four boxes (if they exist).

More formally, suppose the given orthotube consists of boxes
$B_0, B_1, \dots, B_n$ in order.
For some $k \geq 0$, we construct a chain code $U_k$
whose corresponding unfolding is a nonoverlapping unfolding of
the suborthotube $B_0, B_1, \dots, B_k$.
This chain code is thus a Hamiltonian path on the faces of boxes
$B_0, B_1, \dots, B_k$ that are actually faces of the orthotube ---
that is, excluding the shared face \defn{$\hole(i)$} between $B_i$ and $B_{i+1}$
for $0 \leq i < n$ (see Figure \ref{figure:defn:hole}).
Note in particular that the suborthotube $B_0, B_1, \dots, B_k$ is not
a normal orthotube (for $k < n$) because $B_k$ is missing one face, $\hole(k)$.
In the chain code $U_k$ we include a turn code for the last face of $B_k$
visited (for $k < n$), and require that this turn would bring the path
into a face of $B_{k+1}$; we call this requirement \defn{continuability}.
We construct $U_k$ inductively, using a previous $U_{k-x}$ with
$\qturn(U_{k-x})=0$ to construct $U_k$ with $\qturn(U_k) = 0$.
In the final step, we construct $U_n$ (which may not have $\qturn(U_n) = 0$),
which is the chain code for an unfolding of the entire orthotube.
%Hence, our goal is to find the chain code $U_n$ such that the unfolding dual of $U_n$ does not overlap. We create the chain code in an inductive manner and ensure that $\qturn$ is always restored to $0$. (This invariant makes it easier to check that $\qturn$ of any prefix of the chain code is in $\{-1, 0, 1\}$ which implies that there is no overlapping as mentioned in the previous section.)
%For each $i \in \{1, 2, \dots, n-1\}$, we unfold $B_i$ based on the
%relative positions of $B_{i-1}$ and $B_{i+1}$, sometimes depending on
%future boxes $B_{i+2}$ to $B_{i+4}$.

\begin{figure}[H]
  \centering
  \begin{subfigure}[t]{0.45\linewidth}
    \centering
    \includegraphics[scale=0.8]{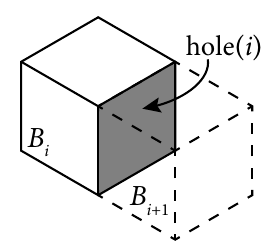}
    \caption{\label{figure:defn:hole}Face $\hole(i)$ is the shared face of $B_i$ and $B_{i+1}$, which is not a face of the orthotube.}
  \end{subfigure}\hfil
  \begin{subfigure}[t]{0.45\linewidth}
    \centering
    {\includegraphics[scale=0.8]{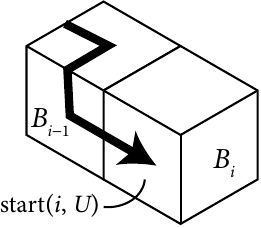}}
    \caption{\label{figure:defn:start}Face $\start(i, U)$ is the first face of $B_i$ encountered by the unfolding chain code $U$ (drawn as an arrow path).}
  \end{subfigure}

  \caption{Definitions of the faces $\hole(i)$ and $\start(i, U)$.}
  \label{figure:defn}
\end{figure} 

For the base case, we construct either $U_0 = LSSR$ or $U_0 = RSSL$;
see Figure~\ref{figure:base_case}.
In either case, $\qturn(U_0) = 0$.
By starting this chain code at the face of $B_0$ opposite the shared face
$\hole(0)$ with $B_1$,
it successfully visits all faces of $B_0$ except the shared face,
and the last turn code attempts to enter a face of~$B_1$.
Even fixing the initial direction for the chain code,
at least one of the two choices for $U_0$ will be continuable,
successfully entering a face of $B_1$ and not $B_2$.

\begin{figure}[H]
  \centering
  \begin{subfigure}[t]{0.45\linewidth}
    \centering
    \includegraphics[scale=0.8]{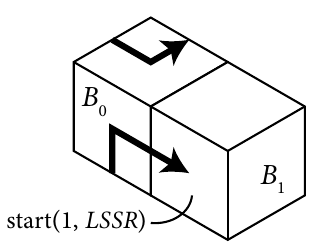}
    \caption{$LSSR$}
  \end{subfigure}\hfill
  \begin{subfigure}[t]{0.45\linewidth}
    \centering
    \includegraphics[scale=0.8]{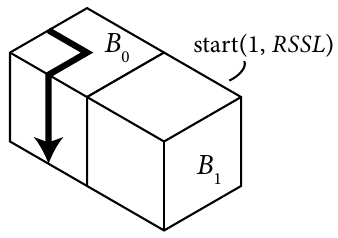}
    \caption{$RSSL$}
  \end{subfigure}

  \caption{Two possible ways to construct $U_0$ starting at the face of $B_0$ opposite to $\hole(0)$.}
  \label{figure:base_case}
\end{figure} 

For the inductive step, we are given a continuable chain code $U_{i-1}$
with $\qturn(U_{i-1}) = 0$.
%which unfolds $B_0, B_1, \dots, B_{i-1}$,
%and it remains to unfold $B_i, \dots, B_n$.
%However, the remainder is not a polycube in a regular sense
%because $B_i$ is missing one face, the face shared with $B_{i-1}$.
%We refer to this shared face as $\hole(i-1)$ (see Figure \ref{figure:defn}).
%In order to continue the chain code into $B_i$, $U_{i-1}$ must be
%\defn{continuable} in the sense that the last face it visits is
%adjacent to a face of $B_i$.
By continuability, the last turn code of $U_{i-1}$ enters a face of $B_i$,
which we call \defn{$\start(i, U_{i-1})$}; see Figure~\ref{figure:defn:start}.
%Hence, after $U_{i-1}$, we continue unfolding the remaining part from the start face to enter $B_i$. The $B_i$'s start face is the only face (not $\hole(i-1)$) that shares an edge with the last visited face of $B_0, B_1, \dots, B_{i-1}$ when following $U_{i-1}$. We refer to this start face as $\start(i, U_{i-1})$ (see Figure \ref{figure:defn}).
%Now that we know $U_{i-1}$ and $\start(i, U_{i-1})$,
Now we unfold $B_i$ based on the relative position of the next box $B_{i+1}$,
in order to guarantee that $U_i$ is continuable into $B_{i+1}$.
There are four cases for the relative position,
which we denote \defn{$N(i,U_{i-1})$}, as follows;
refer to Figure~\ref{figure:pattern}.
%
%\begin{definition}
%For each integer $i$ and chain code $U$ of first $i-1$ boxes, let $R$ be a function that takes $i$ and $U$ as inputs and returns one value from $\{A, B, C, D\}$ depending on the position of $\start(i, U)$ and $\hole(i)$ as follows:
\begin{itemize}
    \item $N(i, U) = S$ if $\start(i, U)$ can be directed to $\hole(i)$ by going straight, as in Figure \ref{figure:pattern:straight}.
    \item $N(i, U) = O$ if $\start(i, U)$ is on the opposite side of $\hole(i)$, as in Figure \ref{figure:pattern:opposite}.
    \item $N(i, U) = L$ if $\start(i, U)$ can be directed to $\hole(i)$ by turning left, as in Figure \ref{figure:pattern:left}.
    \item $N(i, U) = R$ if $\start(i, U)$ can be directed to $\hole(i)$ by turning right, as in Figure \ref{figure:pattern:right}.
\end{itemize}
%\end{definition}

\begin{figure}[H]
\centering
\subcaptionbox{\label{figure:pattern:straight}Straight}{\includegraphics[scale=0.6]{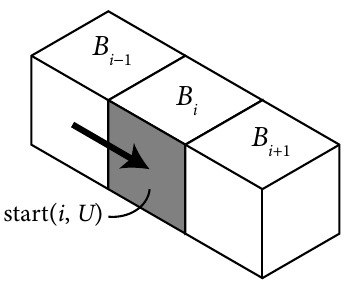}}\hfill
\subcaptionbox{\label{figure:pattern:opposite}Opposite}{\includegraphics[scale=0.6]{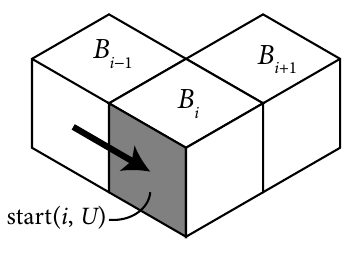}}\hfill
\subcaptionbox{\label{figure:pattern:left}Left}{\includegraphics[scale=0.6]{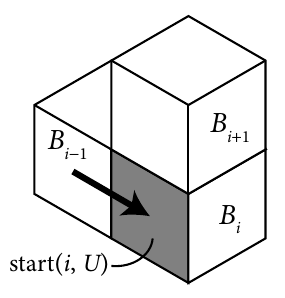}}\hfill
\subcaptionbox{\label{figure:pattern:right}Right}{\includegraphics[scale=0.6]{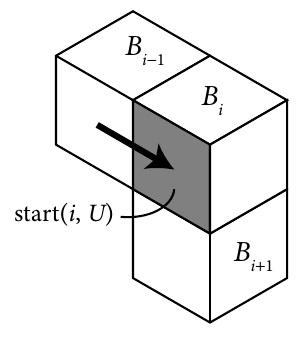}}

\caption{Four possible configurations of the next box $B_{i+1}$ (equivalently, $\hole(i)$) relative to box $B_i$ and face $\start(i, U)$. Each configuration gives a different value for $N(i, U)$.}
\label{figure:pattern}
\end{figure} 

The chain code needed to unfold box $B_i$ in each case is different.
For example, if $N(i, U) = S$ as shown in Figure \ref{figure:pattern:straight},
then we might unfold by $RSSL$ or $LSSR$;
while, if $N(i, U) = O$ as shown in Figure \ref{figure:pattern:opposite},
then we might unfold by $RLSR$ or $LRSL$ instead.  
But only some of these chain codes may be continuable, depending on
the relative configuration $N(i+2,U)$ of box $B_{i+2}$.

%Note that in the examples mentioned above, we only consider the relative position of box $B_{i+1}$, so it seems like there are multiple ways to unfold the box $B_i$ and continue the process. However, this might not be true when we take $B_{i+2}$ into account. For example, after appending the chain code to $U_i$, $\start(i+1, U_i)$ can be the same as $\hole(i+2)$. Hence, we will use the following notion to ignore such possibilities.
%
%\begin{definition}
%For each integer $i$ and a chain code $U$ which unfolds the first $i$ boxes, the chain code $U$ is \defn{not continuable} if the last face that $U$ visited does not connect with the box $B_{i+1}$ (except when $i = n$). Otherwise, the chain code $U$ is \defn{continuable}.
%\end{definition}

In the remainder of this section, we provide a chain code to append to
$U_{i-1}$ that has $\qturn = 0$,
determined by $N(i, U_{i-1})$ and its continuability. 
In some cases, such as when $N(i, U_{i-1}) = S$, we can find a continuable chain code to unfold the box $B_i$, extend the code $U_{i-1}$ to $U_i$ with $\qturn(U_i) = 0$, and continue to the induction step $i+1$. However, in most cases, the continuable chain code involves unfolding multiple boxes at a time in order to restore $\qturn$ to $0$, resulting in skipped step(s).
We also need to ensure that the provided chain code visits every face and the $\qturn$ of any prefix of the chain code is in $\{-1, 0, +1\}$. These facts are easy to check and will be omitted here.

\begin{case}\label{case-S} $N(i, U_{i-1}) = S$.

There are two ways to unfold $B_i$, $LSSR$ and $RSSL$;
see Figure~\ref{figure:caseS}.
Because both chain codes have $\qturn = 0$, then no matter which of these chain codes we append, we still have $\qturn(U_i) = 0$. Because $LSSR$ and $RSSL$ lead to different faces, at least one of them makes $U_i$ continuable.
\begin{figure}[h]
  \centering
  \begin{subfigure}[t]{0.45\linewidth}
    \centering
    \includegraphics[scale=0.8]{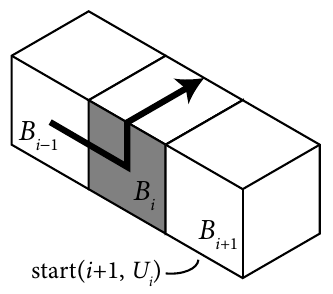}
    \caption{$LSSR$}
  \end{subfigure}\hfil
  \begin{subfigure}[t]{0.45\linewidth}
    \centering
    \includegraphics[scale=0.8]{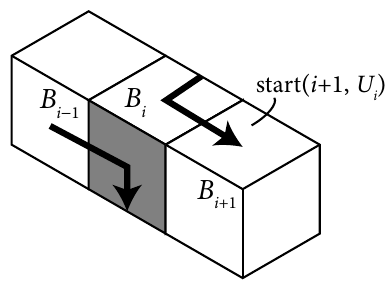}
    \caption{$RSSL$}
  \end{subfigure}

  \caption{Two possible ways to unfold $B_i$ when $N(i, U) = S$. The gray face is $\start(i, U_{i-1})$.}
  \label{figure:caseS}
\end{figure} 
\end{case}

\begin{case}\label{case-L} $N(i, U_{i-1}) = L$.

We consider two ways to unfold $B_i$, $RLRL$ and $RSLR$;
see Figure \ref{figure:caseL:a} and \ref{figure:caseL:b}.
We prefer adding $RLRL$ because its $\qturn$ is $0$.
Thus, if $U_{i-1} + RLRL$ is continuable,
then we assign $U_i = U_{i-1} + RLRL$, and this satisfies our invariants.%
\footnote{We use $+$ to denote concatenation of chain codes.}

In the subcase when $U_{i-1} + RLRL$ is not continuable,
we need to unfold $B_i$ with $RSLR$, i.e., set $U_i = U_{i-1} + RSLR$.
Thus $\qturn(U_i) = 1$, so we need to unfold the next box
in order to restore $\qturn$ to $0$.

\begin{figure}[H]
\centering
\subcaptionbox
  {\label{figure:caseL:a}$RLRL$}
  {\includegraphics[scale=0.7]{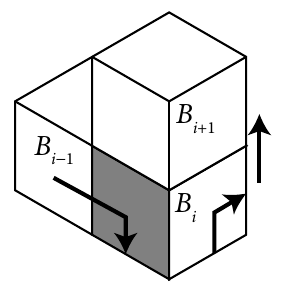}}\hfil
\subcaptionbox
  {\label{figure:caseL:b}$RSLR$}
  {\includegraphics[scale=0.7]{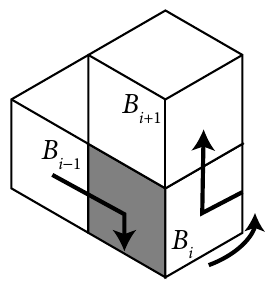}}\hfil
\subcaptionbox
  {\label{figure:caseL:c}The position of $B_{i+2}$ (red box) when pattern $RLRL$ is not continuable.}
  {\includegraphics[scale=0.7]{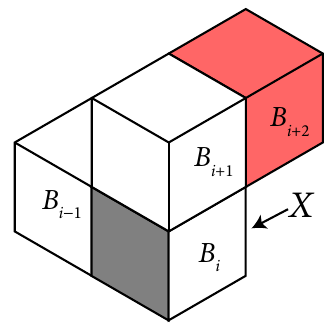}}

\caption{Two possible ways to unfold $B_i$ when $N(i, U) = L$, and an analysis of the latter case.}
\label{figure:caseL}
\end{figure} 

Because $U_{i-1} + RLRL$ is not continuable, the face $X$ of box $B_i$
opposite $\start(i,U_{i-1})$ (as shown in Figure \ref{figure:caseL:c})
must not be adjacent to box $B_{i+1}$, so it must be adjacent to box $B_{i+2}$.
Thus, after unfolding $B_i$ with $RSLR$, we have $N(i+1, U_i) = R$.
We consider unfolding $B_{i+1}$ with either $LRLR$ or $LSRL$
(the reflections of the unfoldings we considered for $B_i$,
to keep $\qturn \in \{-1,0,+1\}$),
but now we prefer $LSRL$ because it restores $\qturn(U_{i+1})$ to $0$. 

If $U_i + LSRL$ is continuable, then we assign $U_{i+1} = U_i + LSRL$
and satisfy the invariants.
Otherwise, we set $U_{i+1} = U_i + LRLR$ and need to continue unfolding,
as $\qturn(U_{i+1}) = 1$ still.
The following lemma guarantees that we remain in the same relative
configuration in this subsubcase:

\begin{lemma}\label{lemma-D}
If we have an integer $j$ and chain code $U$ such that $N(j, U) = R$ and
$U + LSRL$ is not continuable, then $N(j+1, U + LRLR) = R$.
%If $N(j, U_{j-1}) = R$ and $U_{j-1} + LSRL$ is not continuable,
%then $N(j+1, U_{j-1} + LRLR) + R)$.
\end{lemma}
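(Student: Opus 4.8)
The plan is to fix a concrete coordinate frame for the box $B_j$ using the configuration $N(j,U)=R$, trace the two candidate chain codes $LSRL$ and $LRLR$ face by face along the surface, read off where each one exits and with what direction, and then combine this with the position of $\hole(j+1)$ that is forced by the non-continuability hypothesis. By the rotational symmetry of the cube I may assume $B_j=[0,1]^3$ with $\start(j,U)$ the top face $z=1$, the path arriving at $\start(j,U)$ heading in the $+y$ direction, and $\hole(j)$ the face $x=1$ (so $B_{j+1}$ lies in $x\ge 1$). Indeed $N(j,U)=R$ says exactly that a right turn at $\start(j,U)$ points the heading toward $\hole(j)$, which holds here since a right turn sends heading $+y$ to heading $+x$, the direction of $\hole(j)$ (see Figure~\ref{figure:pattern:right}). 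In this frame the only remaining freedom is which face of $B_{j+1}$ is $\hole(j+1)$, i.e.\ where $B_{j+2}$ sits.

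First I would trace $LSRL$, tracking the state as a (face, heading) pair and using that every edge internal to $B_j$ is a convex cube edge while the boundary between coplanar faces of consecutive boxes is flat. One finds that $LSRL$ visits, in order, the top, the $x=0$ face, the bottom, and the $y=1$ face of $B_j$ --- precisely the four faces that are neither $\hole(j-1)$ nor $\hole(j)$ --- and that its final $L$ turn crosses the $x=1$ edge of the $y=1$ face onto the north ($y=1$) face of $B_{j+1}$. Hence $U+LSRL$ is continuable exactly when that north face of $B_{j+1}$ is an actual surface face, and it fails to be continuable exactly when $B_{j+2}$ sits on the $+y$ side of $B_{j+1}$, i.e.\ when $\hole(j+1)$ is the north face of $B_{j+1}$. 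This is the geometric content of the hypothesis, and it is the analogue for the $R$ configuration of the opposite-face argument used in Case~\ref{case-L} (compare Figure~\ref{figure:caseL:c}).

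Next I would trace $LRLR$ in the same frame. It visits the top, the $x=0$ face, the $y=1$ face, and the bottom of $B_j$ (again exactly the four non-hole faces, so it is a valid Hamiltonian path on the faces of $B_j$ and every prefix has $\qturn\in\{-1,0,+1\}$), and its final $R$ turn crosses the $x=1$ edge of the bottom face onto the bottom ($z=0$) face of $B_{j+1}$, arriving with heading $+x$. Because the hypothesis puts $B_{j+2}$ on the $+y$ side of $B_{j+1}$, this bottom face of $B_{j+1}$ is a genuine surface face (it is neither $\hole(j)$ nor $\hole(j+1)$), so $U+LRLR$ is itself continuable and $\start(j+1,U+LRLR)$ is this bottom face of $B_{j+1}$, entered with heading $+x$.

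Finally I would combine these facts. We have $\start(j+1,U+LRLR)$ equal to the bottom face of $B_{j+1}$ with heading $+x$, while $\hole(j+1)$ is its north face. A right turn at this start face sends the heading from $+x$ to $+y$, pointing exactly toward the shared edge with the north face, i.e.\ toward $\hole(j+1)$, whereas a straight move or a left turn do not; by the definition of $N$ this says precisely that $N(j+1,U+LRLR)=R$. The point that needs the most care throughout is the surface geometry at the box junctions --- distinguishing the flat boundaries between coplanar faces of consecutive boxes from the convex edges internal to a single box --- since this is what decides both which face a given turn exits onto and whether that face is a hole. The remaining sub-configurations (the admissible positions of the predecessor box $B_{j-1}$, which fix which faces of $B_j$ are holes) are either forced by the requirement that $\start(j,U)$ be the genuinely first face of $B_j$ or are handled by the cube's rotational symmetry together with the mirror symmetry $L\leftrightarrow R$.
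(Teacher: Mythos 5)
Your proof is correct and follows essentially the same route as the paper's: identify that non-continuability of $U+LSRL$ forces $\hole(j+1)$ to be the face of $B_{j+1}$ that $LSRL$ would exit onto, then trace where $LRLR$ exits and observe that a right turn from there points at $\hole(j+1)$. The only difference is presentational --- the paper delegates the face-by-face tracing to its figures, while you carry it out explicitly in a fixed coordinate frame (and your traces check out).
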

\begin{proof}
%First, we have that the positions of $B_j$ and $B_{j+1}$ satisfy $N(j, U) = R$ (see Figure \ref{figure:pattern}(d)). There are two ways to unfold $B_j$: $LSRL$ and $LRLR$.
Refer to Figure~\ref{figure:lemmaD:a}, which in particular shows $B_j$ and
$B_{j+1}$ with $N(j,U) = R$.
Given that $U + LSRL$ is not continuable, following $LSRL$ in $B_j$
does not lead us to the box $B_{j+1}$, so that face of $B_{j+1}$ must be
$\hole(j+1)$.  Focusing on boxes $B_j$ to $B_{j+2}$,
as in Figure~\ref{figure:lemmaD:b}, we see that $N(j+1,U+LRLR) = R$ as desired.
\end{proof}

\begin{figure}[h]
  \centering
  \begin{subfigure}[t]{0.30\linewidth}
    \centering
    \includegraphics[scale=0.7]{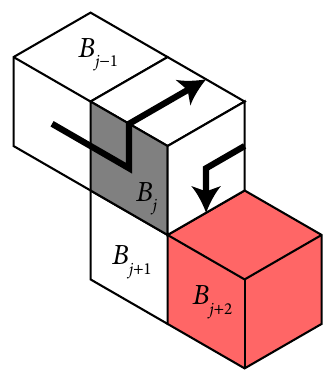}
    \caption{\label{figure:lemmaD:nc} A configuration from $B_{j-1}$ to $B_{j+2}$ where $N(j, U) = R$ and $U + LSRL$ is not continuable.}
  \end{subfigure}\hfill
  \begin{subfigure}[t]{0.30\linewidth}
    \centering
    \includegraphics[scale=0.7]{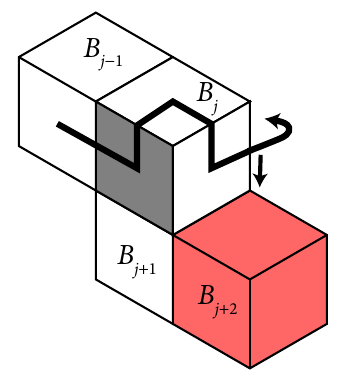}
    \caption{\label{figure:lemmaD:a} $U + LRLR$ is continuable.}
  \end{subfigure}\hfill
  \begin{subfigure}[t]{0.30\linewidth}
    \centering
    \includegraphics[scale=0.7]{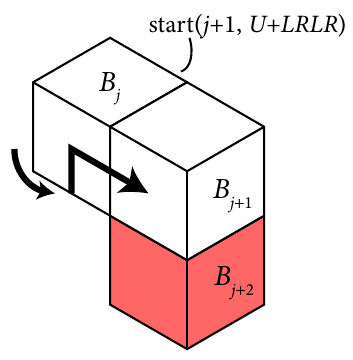}
    \caption{\label{figure:lemmaD:b}The same configuration, focusing on $B_j$ to $B_{j+2}$.}
  \end{subfigure}

  \caption{Illustration of the proof of Lemma~\ref{lemma-D}.}
  \label{figure:lemmaD}
\end{figure}

Because we remain in the same relative configuration by Lemma~\ref{lemma-D},
we can repeat the $LRLR$ unfolding until we reach the first positive integer
$k$ such that $U_{i-1+k} + LSRL$ is continuable.
Then we set
%
%In this case that $U_i + LSRL$ is not continuable, we need to use $LRLR$ to unfold $B_{i+1}$ (and assign $U_{i+1} = U_i + LRLR$). By Lemma \ref{lemma-D}, this leads us to the same configuration $N(i+2, U_{i+1}) = R$ and so we can repeat the same process of checking whether $U_{i+1} + LSRL$ is continuable. 
%
%We can continue the same process until the first positive integer $k$ such that $U_{i+k-1} + LSRL$ is continuable. (Lemma \ref{lemma-D} guarantees that $N(i+j, U_{i+j-1}) = R$ for all $j \leq k$.) As a result, we have 
\[U_{i+k} = U_{i-1} + RSLR + (k-1) \times LRLR + LSRL\]
(where $\times$ denotes repetition of a chain code).
Thus $\qturn(U_{i+k}) = 0$ as desired, and we satisfy the invariants.
%Then, we can continue the process on $B_{i+k+1}, \dots, B_n$.

\end{case}

\begin{case}\label{case-R} $N(i, U_{i-1}) = R$.

We generate a chain code in the same way as in Case~\ref{case-L}
where $N(i, U_{i-1}) = L$, but swapping the roles of $L$ and $R$
in both $U$ and $N$ (effectively reflecting all diagrams).

\end{case}

\begin{case}\label{case-O} $N(i, U_{i-1}) = O$.

There are two ways to unfold $B_i$, $RLSR$ and $LRSL$,
with respective $\qturn$ values of $+1$ and $-1$.
Hence, again, we need to unfold the next box in order to
restore $\qturn$ to~$0$.
We do so exhaustively for all possible relative positions of $B_{i+2}$,
as shown in Figure~\ref{figure:caseO}.

\begin{figure}[H]
\centering
\subcaptionbox{\label{figure:caseO:a}}{\includegraphics[scale=0.5]{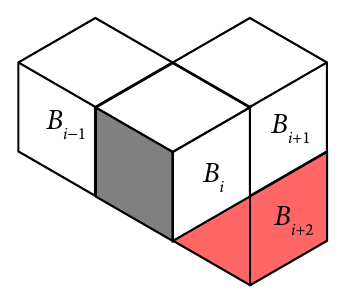}}\hfill
\subcaptionbox{\label{figure:caseO:b}}{\includegraphics[scale=0.5]{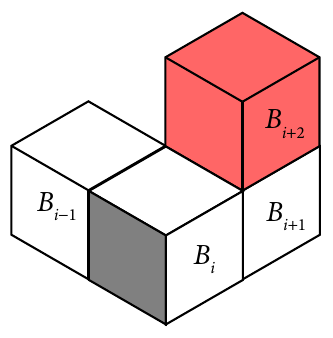}}\hfill
\subcaptionbox{\label{figure:caseO:c}}{\includegraphics[scale=0.5]{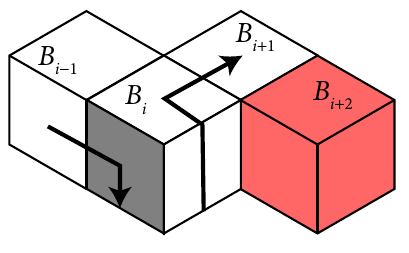}}\hfill
\subcaptionbox{\label{figure:caseO:d}}{\includegraphics[scale=0.5]{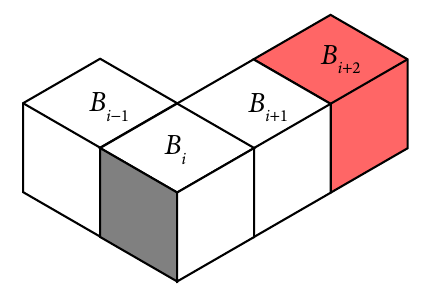}}

\caption{Four possible positions of $B_{i+2}$ (red box) when $N(i, U) = O$.}
\label{figure:caseO}
\end{figure} 

\begin{subcase}\label{caseO:a}
  $U_{i-1} + LRSL$ is not continuable,
  as shown in Figure \ref{figure:caseO:a}.

In this subcase, we need to further consider the position of $B_{i+1}$
(if it exists). 
Observe that $U_{i-1} + RLSR$ is continuable.
If $U_{i-1} + RLSR LRSL$ is continuable, then we assign $U_{i+1} = U_{i-1} + RLSR LRSL$, which restores $\qturn$ to $0$.

On the other hand, if $U_{i-1} + RLSR LRSL$ is not continuable, then
the position of $B_{i+3}$ is as shown in Figure~\ref{figure:caseO1_prob}.
We then assign $U_{i+2} = U_{i-1} + LRSL RLRS RLSS$ which restores $\qturn$ to $0$.

\begin{figure}[h]
  \centering
  \begin{subfigure}[t]{0.45\linewidth}
    \centering
    \includegraphics[scale=0.8]{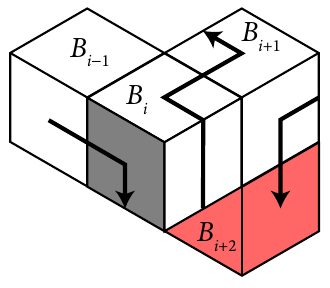}
    \caption{$RLSRLRSL$}
  \end{subfigure}\hfil
  \begin{subfigure}[t]{0.45\linewidth}
    \centering
    \includegraphics[scale=0.8]{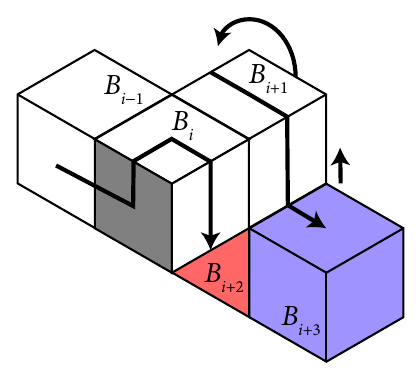}
    \caption{The extension of the configuration in Figure~\ref{figure:caseO:a} when $U_{i-1} + RLSR LRSL$ is not continuable. Box $B_{i+2}$ is red and $B_{i+3}$ is blue.}
  \end{subfigure}

  \caption{Two ways to generate the chain code for the configuration in Figure~\ref{figure:caseO:a}.}
  \label{figure:caseO1_prob}
\end{figure} 

% \begin{figure}[H]
% \centering
% \includegraphics[scale=0.7]{figures/caseO-a_solve.pdf}

% \xxx{Unlike other figures, here the arrows show uncontinuable paths.
%   E.g. starts with LRSL path but this is not uncontinuable.
%   Is this intentional? If so, should say it in caption.}
% \xxx{add $B_i$ labels}
% \caption{The extension of the configuration in Figure~\ref{figure:caseO:a} when $U_{i-1} + RLSR LRSL$ is not continuable. The box $B_{i+3}$ is represented by the gray box with dashed borderline.}
% \label{figure:caseO1_prob}
% \end{figure}

\end{subcase}

\begin{subcase}\label{caseO:b}
  $U_{i-1} + RLSR$ is not continuable,
  as shown in Figure \ref{figure:caseO:b}.

  We generate a chain code as in the previous subcase
  by swapping the roles of $L$ and $R$ (effectively reflecting all diagrams).
\end{subcase}

\begin{subcase}\label{caseO:c}
  $B_{i+2}$ is ``coplanar'' with $B_{i-1}, B_i, B_{i+1}$,
  as shown in Figure \ref{figure:caseO:c}.

Because $U_{i-1} + RLSR$ is continuable,
we can assign $U_i = U_{i-1} + RLSR$.
Then we have $N(i+1, U_i) = R$ and $\qturn(U_i) = +1$.
If $U_i + LSRL$ is continuable, then we assign
\[U_{i+1} = U_i + LSRL = U_{i-1} + RLSR LSRL,\]
which restores $\qturn$ to~$0$.

On the other hand, if $U_i + LSRL$ is not continuable,
then we can use Lemma~\ref{lemma-D} to repeatedly unfold boxes with $LRLR$
until we reach a box where $LSRL$ is continuable,
similar to Case~\ref{case-L}.
Thus with the same method we obtain a positive integer $k$
and chain code $U_{i+k}$ with $\qturn(U_{i+k}) = 0$.

\end{subcase}

\begin{subcase}\label{caseO:d}
  $B_{i+2}$ and $B_i$ are on opposite sides of $B_{i+1}$,
  as shown in Figure~\ref{figure:caseO:d}.

In this subcase, we exhaustively examine the position of $B_{i+3}$ and provide the extended chain code accordingly. There are five possible relative positions for $B_{i+3}$ as follows;
refer to Figure~\ref{figure:caseO4}.

\begin{figure}[H]
\centering
\subcaptionbox{\label{figure:caseO4:a}}{\includegraphics[scale=0.38]{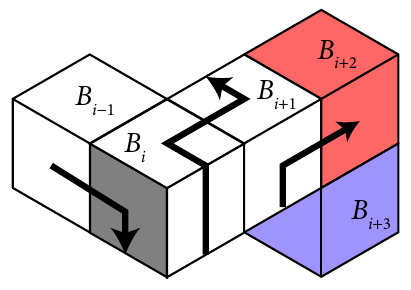}}\hfill
\subcaptionbox{\label{figure:caseO4:b}}{\includegraphics[scale=0.38]{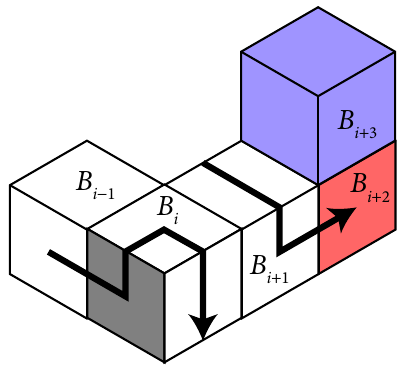}}\hfill
\subcaptionbox{\label{figure:caseO4:c}}{\includegraphics[scale=0.38]{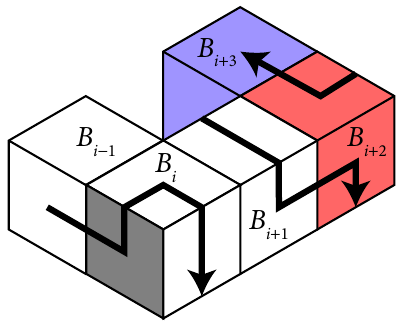}}\hfill
\subcaptionbox{\label{figure:caseO4:d}}{\includegraphics[scale=0.38]{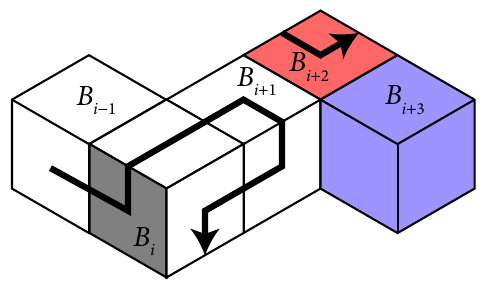}}\hfill
\subcaptionbox{\label{figure:caseO4:5}}{\includegraphics[scale=0.38]{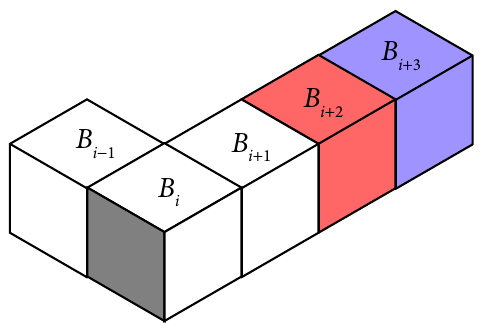}}

\caption{Five possible extensions of configuration shown in Figure \ref{figure:caseO:d}. Box $B_{i+2}$ is red and $B_{i+3}$ is blue.}
\label{figure:caseO4}
\end{figure}
\begin{enumerate}[(a)]
    \item Shown in Figure \ref{figure:caseO4:a}. We assign $U_{i+1} = U_{i-1} + RLSRLSSR$. Hence, we have $\qturn(U_{i+1}) = +1$ and $N(i+2, U_{i+1}) = R$. We can follow the same process as in Case \ref{caseO:c} to get a chain code for some $U_{i+k}$ with $\qturn(U_{i+k}) = 0$.
    \item Shown in Figure \ref{figure:caseO4:b}. We use the same method as the previous case but swap $L$ and $R$.
    \item Shown in Figure \ref{figure:caseO4:c}. If $U_{i-1} + LRSL RSSL RLSR$ is continuable, we assign that to $U_{i+2}$. Otherwise, we have the configuration shown in Figure \ref{figure:caseO4_3_prob}, and we assign $U_{i+2} = U_{i-1} + RLSR LSSR LRSL$ instead. 

    \begin{figure}[H]
    \centering
    \includegraphics[scale=0.7]{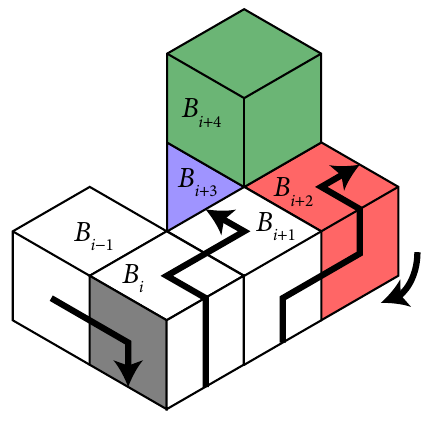}
    \caption{The extension of configuration in Figure \ref{figure:caseO4:c} when $U_{i-1} + LRSLRSSLRLSR$ is not continuable. Box $B_{i+4}$ is green.}
    \label{figure:caseO4_3_prob}
    \end{figure}

    \item Shown in Figure \ref{figure:caseO4:d}. If $U_{i-1} + LSRR LLRL RLSR$ is continuable, then we assign that to $U_{i+2}$. Otherwise, reflecting the extension, $U_{i-1} + RSLL RRLR LRSL$ must be continuable, and we assign that to $U_{i+2}$. In either case, we satisfy the invariants.
    \item In the last case, we have four boxes $B_i, B_{i+1}, B_{i+2}, B_{i+3}$
      in a straight line; refer to Figure~\ref{figure:caseO4:5}.
      
      Observe that as long as the boxes remain in a straight line, unfolding each box will preserve the $\qturn$ value because the only possible ways to unfold are $LSSR$ and $RSSL$. Hence, to restore $\qturn$ to $0$, we need to skip until we find the first box which is not lined up. 
      Let $k \geq 4$ be the smallest integer such that $B_{i+k}$ is not lined up with $B_i, B_{i+1}, B_{i+2}$.
      
      The idea is to unfold $B_i, B_{i+1},\dots$ until $B_{i+k-3}$ or $B_{i+k-2}$, and then apply the same methods used previously in Case~\ref{case-O} to generate chain code for the remaining boxes. The essence of why this works is that the chain code generated in Case~\ref{case-O} (except the forth subcase of Case~\ref{caseO:d}) always fully unfolds box $B_i$ before start unfolding $B_{i+1}$. Hence, if we can simulate the same $\qturn(U_i)$ and $\start(i+1, U_i)$, then we can treat $B_{i+k}$ as $B_{i+2}$ in Cases~\ref{caseO:a}--\ref{caseO:c} or as $B_{i+3}$ in Case~\ref{caseO:d} and follow the same method. 
      
      We claim that, for any $j > 0$, it is possible to unfold boxes $B_i$ to $B_{i+2j}$ in such a way that $\qturn(U_{i+2j})$ can be either $+1$ or $-1$ and $\start(i+2j+1, U_{i+2j})$ can be either top or bottom of the figure when looking from the perspective shown in Figure~\ref{figure:caseO4_5}.
      For $j=1$, the claim is true from  $\{U_{i-1} + RLSRLSSRLSSR, U_{i-1} + RSLLRRLRLSSR, U_{i-1} + LRSLRSSLRSSL, U_{i-1} + LSRRLLRLRSSL\}$, as shown in Figure~\ref{figure:caseO4_5}.
      To extend the result to any $j$, we need to unfold every two consecutive boxes with $RSSLRSSL$ if $\qturn(U_{i+1}) = -1$ and $LSSRLSSR$ if $\qturn(U_{i+1}) = +1$. This proves the claim.
      
      Next, we provide an algorithm to generate the chain code to unfold $B_i, \dots, B_{i+k}$.
      
      For even $k$, we first find the Case~\ref{caseO:a}--\ref{caseO:c} where their $B_i, B_{i+1}$ and $B_{i+2}$ are rearranged in the same way as boxes $B_{i+k-2}, B_{i+k-1}$, and $B_{i+k}$. (We also consider the upside-down version of Case~\ref{caseO:c}, so that it covers the case where $B_{i+k}$ is ``coplanar'' with $B_{i+k-1}, B_{i+k-2}$ and $B_{i+k-3}$ but has a different configuration from Case~\ref{caseO:c}.)
      Then, we generate the chain code according to the claim so we get the desired $\qturn(U_{i+k-2})$ and $\start(i+k-1, U_{i+k-2})$.
      
      For odd $k$, we follow a similar process as the even case, but with $B_{i+k-3}$ instead of $B_{i+k-2}$, and follow the same method for first three subcases of Case~\ref{caseO:d}. For the fourth case where the configuration is aligned with Figure~\ref{figure:caseO4:d}, we use the flipped chain code of Figure~\ref{figure:caseO4:c} instead because it fully unfolds $B_i$ first.

    %   We unfold $B_i$, $B_{i+1}$, and $B_{i+2}$ in such a way that
    %   $\start(i+3, U_{i+2})$ and $\qturn(U_{i+2})$ correspond to
    %   $\start(i+1, U_i)$ and $\qturn(U_i)$ in the previous cases respectively.
    %   This can be done by assigning $U_{i+2}$ to be one of $\{U_{i-1} + RLSRLSSRLSSR, U_{i-1} + RSLLRRLRLSSR, U_{i-1} + LRSLRSSLRSSL, U_{i-1} + LSRRLLRLRSSL\}$, as shown in Figure~\ref{figure:caseO4_5}.
    %   Thus we can apply the same methods as previous cases
    %   based on the position of $B_{i+4}$ and so on.
    %   \xxx{I don't understand which cases are being applied here.
    %     Is it Cases 1--3?  Also I'm confused how we're continuing with $i$
    %     increased by $2$ yet we've unfolded three cubes.  Is the unfolding of
    %     $B_{i+2}$ intended to match what ends up happening recursively?
    %     If so, shouldn't we save the unfolding of $B_{i+2}$ for later...?}
    %   Eventually, we obtain an integer $k$ and chain code $U_{i+k}$
    %   with $\qturn(U_{i+k}) = 0$.
    \begin{figure}[H]
        \centering
        \subcaptionbox{}{\includegraphics[scale=0.4]{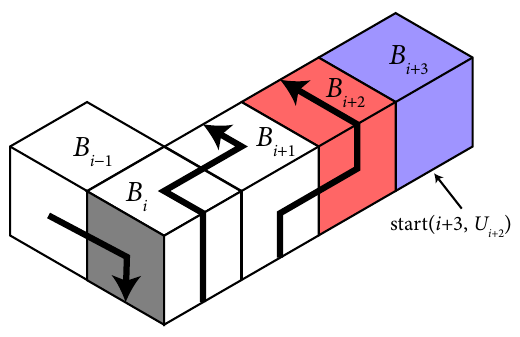}}\hfill
        \subcaptionbox{}{\includegraphics[scale=0.4]{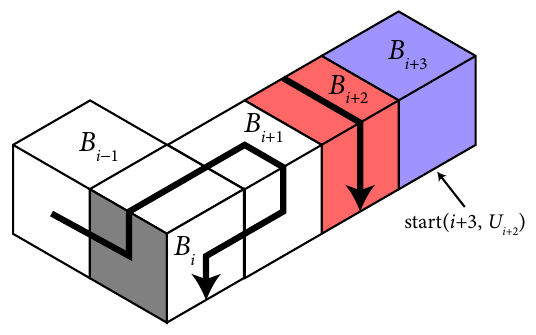}}\hfill
        \subcaptionbox{}{\includegraphics[scale=0.4]{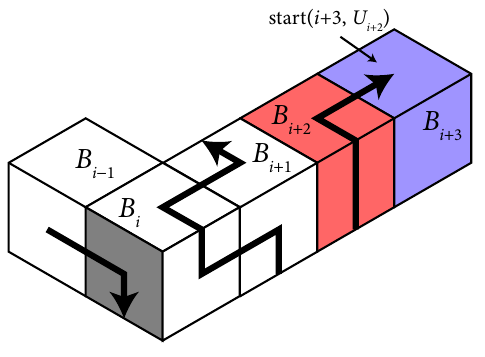}}\hfill
        \subcaptionbox{}{\includegraphics[scale=0.4]{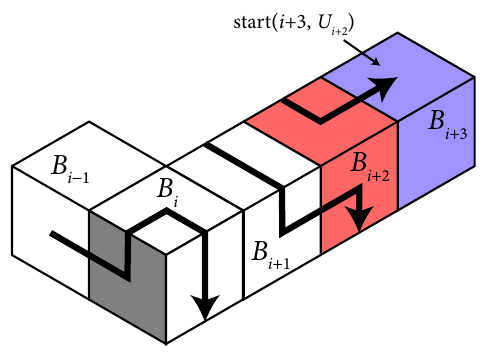}}
        \caption{Four ways to unfold $B_i$ to $B_{i+2}$ so we can have desired $\start(i+3, U_{i+2})$ and $\qturn(U_{i+2})$.}
        \label{figure:caseO4_5}
        \end{figure}
\end{enumerate}

\end{subcase}

\end{case}

Combining these cases, we have provided a chain code for every possible position of $B_{i+1}$ (and sometimes $B_{i+2}, B_{i+3}, \dots$). We ensure that the invariant for the induction is true: the $\qturn$ value of the chain code is always restored to $0$ (except when we reach $B_n$ first). 

To construct the final unfolding $U_n$, we can temporarily create a new box $B_{n+1}$ attached to $B_n$. By induction, we can find a chain code $U_n$ to unfold $B_0, B_1, \dots, B_n$ and end at $\start(n+1, U_n)$. Then when we remove $B_{n+1}$, the chain code $U_n$ will end at $\hole(n)$ instead, which results in an unfolding of the original orthotube. Because the $\qturn$ of any prefix of the chain code is in $\{-1, 0, +1\}$, by Lemma~\ref{lemma:monotone no overlap}, the corresponding unfolding of $U_n$ does not overlap (and is dual-Hamiltonian) as desired.

\section{Conclusion}\label{section-conclusion}
In this paper, we proposed a new algorithm to unfold orthotubes such that the unfolding path is a Hamiltonian path of the orthotube's face adjacency graph. In other words, we can unfold an orthotube by traveling through all faces on the orthotube's surface without having to visit the same face twice.

An intriguing harder goal is to find a Hamiltonian \emph{cycle} through the
face adjacency graph, such that breaking that cycle into a path results in
an unfolding without overlap.  This would require the unfolding to effectively
traverse the length of the orthotube twice (down and back up).
Potentially, such an approach could be more amenable for extension to
unfolding \defn{orthotrees} (boxes glued to form a tree instead of a path),
by recursing on subtrees and combining the cycles together.

\bibliographystyle{alpha}
\bibliography{references} % see references.bib for bibliography management

\newcommand{\etalchar}[1]{$^{#1}$}
\begin{thebibliography}{GGRW18}

\bibitem[AD11]{zachary2011NP}
Zachary Abel and Erik~D. Demaine.
\newblock Edge-unfolding orthogonal polyhedra is strongly {NP}-complete.
\newblock August 2011.

\bibitem[BDD{\etalchar{+}}98]{demainez1998unfolding}
Therese Biedl, Erik Demaine, Martin Demaine, Anna Lubiw, Mark Overmars, Joseph
  O'Rourke, Steve Robbins, and Sue Whitesides.
\newblock Unfolding some classes of orthogonal polyhedra.
\newblock In {\em Proceedings of the 10th Canadian Conference on Computational
  Geometry}, pages 70--71, 1998.

\bibitem[BDE{\etalchar{+}}03]{Bern2003UnunfoldablePW}
Marshall~W. Bern, Erik~D. Demaine, David Eppstein, Eric~H. Kuo, Andrea Mantler,
  and Jack Snoeyink.
\newblock Ununfoldable polyhedra with convex faces.
\newblock {\em Computational Geometry: Theory and Applications}, 24:51--62,
  2003.

\bibitem[CN89]{chiba1989hamiltonian}
Norishige Chiba and Takao Nishizeki.
\newblock The {H}amiltonian cycle problem is linear-time solvable for
  4-connected planar graphs.
\newblock {\em Journal of Algorithms}, 10(2):187--211, 1989.

\bibitem[CY15]{yjchang2015}
Yi{-}Jun Chang and Hsu{-}Chun Yen.
\newblock Unfolding orthogonal polyhedra with linear refinement.
\newblock In Khaled~M. Elbassioni and Kazuhisa Makino, editors, {\em
  Proceedings of the 26th International Symposium on Algorithms and
  Computation}, volume 9472 of {\em Lecture Notes in Computer Science}, pages
  415--425, Nagoya, Japan, December 2015.

\bibitem[DDF14]{damian2014unfolding}
Mirela Damian, Erik~D. Demaine, and Robin Flatland.
\newblock Unfolding orthogonal polyhedra with quadratic refinement: the
  delta-unfolding algorithm.
\newblock {\em Graphs and Combinatorics}, 30(1):125--140, 2014.

\bibitem[DDFO17]{genus2}
Mirela Damian, Erik Demaine, Robin Flatland, and Joseph O'Rourke.
\newblock Unfolding genus-2 orthogonal polyhedra with linear refinement.
\newblock {\em Graphs and Combinatorics}, 33(5):1357--1379, 2017.

\bibitem[DDU13]{demaineDU13}
Erik~D. Demaine, Martin~L. Demaine, and Ryuhei Uehara.
\newblock Zipper unfoldability of domes and prismoids.
\newblock In {\em Proceedings of the 25th Canadian Conference on Computational
  Geometry, {CCCG} 2013, Waterloo, Ontario, Canada, August 8-10, 2013}.
  Carleton University, Ottawa, Canada, 2013.

\bibitem[DF18]{DBLP:conf/cccg/DamianF18}
Mirela Damian and Robin~Y. Flatland.
\newblock Unfolding low-degree orthotrees with constant refinement.
\newblock In {\em Proceedings of the 30th Canadian Conference on Computational
  Geometry}, pages 189--208, Winnipeg, Canada, 2018.

\bibitem[DF21]{damian2021unfolding}
Mirela Damian and Robin Flatland.
\newblock Unfolding polycube trees with constant refinement.
\newblock {\em Computational Geometry: Theory and Applications}, 98:101793,
  2021.

\bibitem[DFMO05]{damian2005unfolding}
Mirela Damian, Robin Flatland, Henk Meijer, and Joseph O'Rourke.
\newblock Unfolding well-separated orthotrees.
\newblock In {\em Abstracts from the 15th Annual Fall Workshop Computational
  Geometry}, pages 23--25, 2005.

\bibitem[DFO07]{damian2007epsilon}
Mirela Damian, Robin Flatland, and Joseph O'Rourke.
\newblock Epsilon-unfolding orthogonal polyhedra.
\newblock {\em Graphs and Combinatorics}, 23(1):179--194, 2007.

\bibitem[DM04]{damian2004edge}
Mirela Damian and Henk Meijer.
\newblock Edge-unfolding orthostacks with orthogonally convex slabs.
\newblock In {\em Abstracts from the 14th Annual Fall Workshop on Computational
  Geometry}, pages 20--21, 2004.

\bibitem[DO07]{Demaine:2008:GFA:1478766}
Erik~D. Demaine and Joseph O'Rourke.
\newblock {\em Geometric Folding Algorithms: Linkages, Origami, Polyhedra}.
\newblock Cambridge University Press, 2007.

\bibitem[GGRW18]{garcia2018vertex}
Luis Garcia, Andres Gutierrez, Isaac Ruiz, and Andrew Winslow.
\newblock Vertex unfoldings of orthogonal polyhedra: Positive, negative, and
  inconclusive results.
\newblock In {\em CCCG}, pages 217--222, 2018.

\bibitem[HCY19]{kyho2019arbitrary}
Kuan-Yi Ho, Yi-Jun Chang, and Hsu-Chun Yen.
\newblock Unfolding some classes of orthogonal polyhedra of arbitrary genus.
\newblock {\em Journal of Combinatorial Optimization}, 37(2):482--500, February
  2019.

\bibitem[LBG14]{LEMUS20141721}
Eduardo Lemus, Ernesto Bribiesca, and Edgar Garduño.
\newblock Representation of enclosing surfaces from simple voxelized objects by
  means of a chain code.
\newblock {\em Pattern Recognition}, 47(4):1721--1730, 2014.

\bibitem[LBG15]{LEMUS2015101}
Eduardo Lemus, Ernesto Bribiesca, and Edgar Garduño.
\newblock Surface trees – representation of boundary surfaces using a tree
  descriptor.
\newblock {\em Journal of Visual Communication and Image Representation},
  31:101--111, 2015.

\bibitem[LDD{\etalchar{+}}10]{lubiwDDSS10}
Anna Lubiw, Erik~D. Demaine, Martin~L. Demaine, Arlo Shallit, and Jonah
  Shallit.
\newblock Zipper unfoldings of polyhedral complexes.
\newblock In {\em Proceedings of the 22nd Annual Canadian Conference on
  Computational Geometry, Winnipeg, Manitoba, Canada, August 9-11, 2010}, pages
  219--222, 2010.

\bibitem[LPW14]{Liou2014OnEO}
Meng-Huan Liou, Sheung-Hung Poon, and Yu-Jie Wei.
\newblock On edge-unfolding one-layer lattice polyhedra with cubic holes.
\newblock In {\em Proceedings of the 20th International Conference on Computing
  and Combinatorics}, volume 8591 of {\em Lecture Notes in Computer Science},
  Atlanta, August 2014.

\bibitem[O'R08]{o2008unfolding}
Joseph O'Rourke.
\newblock Unfolding orthogonal polyhedra.
\newblock In J.~E. Goodman, J.~Pach, and R.~Pollack, editors, {\em Surveys on
  Discrete and Computational Geometry: Twenty Years Later}, pages 231--255.
  American Mathematical Society, 2008.

\bibitem[Tho83]{thomassen1983theorem}
Carsten Thomassen.
\newblock A theorem on paths in planar graphs.
\newblock {\em Journal of Graph Theory}, 7(2):169--176, 1983.

\end{thebibliography}

\end{document}